\title{How Fast Can We Play Tetris Greedily With Rectangular Pieces?} 
\author{Justin Dallant}{Université libre de Bruxelles, Belgium} {Justin.Dallant@ulb.be}{https://orcid.org/0000-0001-5539-9037}{Supported by the French Community of Belgium via the funding of a FRIA grant.}
\author{John Iacono}{Université libre de Bruxelles, Belgium} {john@johniacono.com}{https://orcid.org/0000-0001-8885-8172}{Supported by the Fonds de la Recherche Scientifique-FNRS under Grant no MISU F 6001 1.}
\authorrunning{J. Dallant and J. Iacono} 
\keywords{Tetris, Fine-grained complexity, Dynamic data structures, Axis-aligned rectangles} 
\begin{document}

\maketitle

\begin{abstract}
Consider a variant of Tetris played on a board of width $w$ and infinite height, where the pieces are axis-aligned rectangles of arbitrary integer dimensions, the pieces can only be moved before letting them drop, and a row does not disappear once it is full. Suppose we want to follow a greedy strategy: let each rectangle fall where it will end up the lowest given the current state of the board. To do so, we want a data structure which can always suggest a greedy move. In other words, we want a data structure which maintains a set of $O(n)$ rectangles, supports queries which return where to drop the rectangle, and updates which insert a rectangle dropped at a certain position and return the height of the highest point in the updated set of rectangles. We show via a reduction from the Multiphase problem [P\u{a}tra\c{s}cu, 2010] that on a board of width $w=\Theta(n)$, if the OMv conjecture [Henzinger et al., 2015] is true, then both operations cannot be supported in time $O(n^{1/2-\epsilon})$ simultaneously. The reduction also implies polynomial bounds from the 3-SUM conjecture and the APSP conjecture. On the other hand, we show that there is a data structure supporting both operations in $O(n^{1/2}\log^{3/2}n)$ time on boards of width $n^{O(1)}$, matching the lower bound up to an $n^{o(1)}$ factor.
\end{abstract}

\section{Introduction}
\begin{figure}
    \centering
    \includegraphics[width=\textwidth]{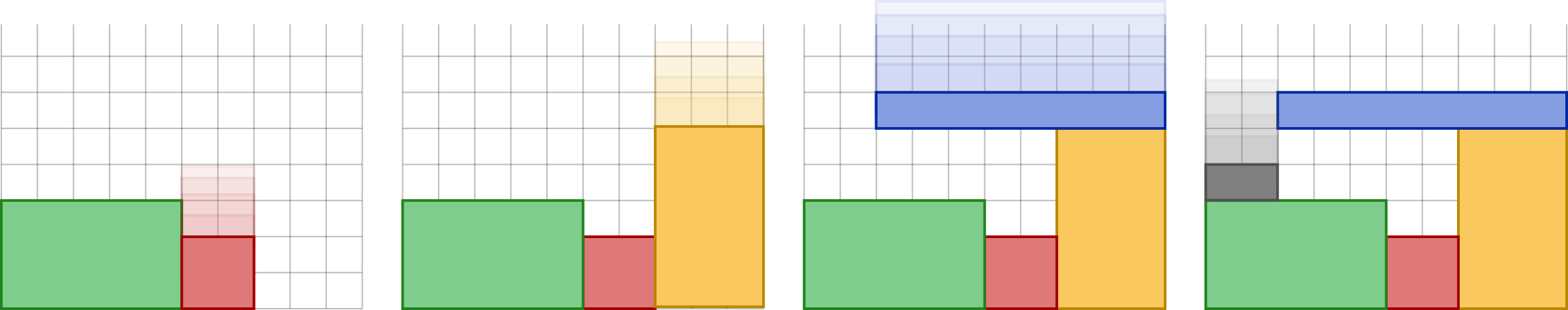}
    \caption{Illustration of a few turns of the game.}\label{fig:game_example}
\end{figure}
Consider a game played on a board of a certain width $w$, which has a wall on the bottom, left and right side but extends on the top as far as we wish. The game goes like this: at each turn, we are given a rectangle of arbitrary integer height and arbitrary integer width (as long as it fits on the board), and we let it drop on the other rectangles (or the bottom of the board) from above starting at the horizontal position of our liking. This is not unlike the basic rule of the famous game Tetris, but instead of having tetrominoes (pieces made of four orthogonally connected unit squares), we have rectangular pieces made of any number of unit squares. See Figure \ref{fig:game_example} for an example of a possible execution of the game.

Now, our goal as the rectangle droppers is to have the rectangles be as low as possible (in Tetris, once a certain height is reached, it means game over!). There are of course various clever strategies and heuristics which could be adopted here. But assume without loss of plausibility that we (the authors) are greedy, and as greedy players we want to adopt the following strategy: at every turn, we let the rectangle drop at a horizontal position which will minimize its vertical position, with no care for the future rectangles we might be given. The question we seek to answer is: how long do we need to think in each turn when adopting this strategy? Here we assume, with a not-so-insignificant loss of plausibility, that we are as powerful as a word-RAM machine. In particular, all conjectures, cited results and claims in this paper are for the word-RAM model with words of length $\Theta(\log n)$, although some carry over to other popular models. All integers appearing in any problem or procedure in this paper have a binary representation of $O(\log n)$ bits.

Let us formalize this problem with the following definition.
\begin{definition}
A Rectangle Dropping Data Structure (RDDS) maintains a set of $O(n)$ independent\footnote{We say two rectangles are independent when their interiors do not intersect.} axis-aligned rectangles in the plane with integer coordinates, lying on or above the $x$-axis and between the vertical lines $x=0$ and $x=w$, and allows the following:
\begin{itemize}
    \item (Preprocessing) Initialize an empty RDDS containing no rectangle.
    \item (Update) Given an axis-aligned rectangle $R$ and a non-negative integer $x_d$, drop $R$ with left border at $x$-coordinate $x_d$ (here we assume that $R$ and $x_d$ are such that $R$ will lie between the lines $x=0$ and $x=w$).
    \item (Query) Given an axis-aligned rectangle $R$, return the position where $R$ must be dropped to end up as low as possible (or one such position if it is not unique) as well as the height of the highest point in the set of rectangles which would result from that move.
\end{itemize}
\end{definition}
Note that an RDDS is actually a bit more powerful than what we need to follow the greedy strategy, as this data structure does not force us to play greedily at each turn, but provides a way to suggest a greedy move at every turn. As no one likes to be forced to do anything (even when the thing in question is what we would have done of our own volition), we will study this variant of the problem. Our results could however be adapted to the case where we can only perform updates corresponding to a greedy strategy.

We would like to give some bounds on the time needed to perform the operations of an RDDS. Because in practice it seems hard to prove good lower bounds on the time needed to solve computational problems, one popular approach (dating back to the seminal works of Cook \cite{DBLP:conf/stoc/Cook71} and Karp \cite{DBLP:conf/coco/Karp72}) has been to show the hardness of some problem assuming the conjectured hardness of some other ``key problem'' (this constitutes what is called a conditional lower bound). This approach, when dealing with specific polynomial bounds (in contrast to showing the $\NP$-hardness of a problem for example) is part of a field of study known today as fine-grained complexity (see e.g., introductory surveys by Bringmann \cite{Bringmann2019} and V.~V.~Williams \cite{Williams2019}). Some of these key problems are the 3-SUM problem, All-Pairs-Shortest-Paths (APSP), Boolean Matrix Multiplication (BMM), Triangle Detection in a graph, Boolean Satisfiability (SAT) and the Orthogonal Vectors problem (2OV).

The use of such approaches for data structures solving dynamic problems was pioneered by P\u{a}tra\c{s}cu \cite{Patrascu2010}. In particular, he introduced the so-called Multiphase problem and showed that it can be fine-grained reduced to many dynamic problems. He also showed that assuming the famous 3-SUM problem is hard, there are polynomial lower bounds to how fast the Multiphase problem can be solved, which in turn implies polynomial lower bounds for the dynamic problems which it reduces to. There has since been much work on conditional lower bounds for dynamic problems \cite{AbboudD2016, AbboudW2014, AbboudWY2018, AlmanMW2017_finegrained, AmirCLL2014_finegrained, AmirKLPPS2019-finegrained, BaswanaCC02016-finegrained, BerkholzKS2017-finegrained, BerkholzKS2018-finegrained, ChenDGWXY2018-finegrained, Dahlgaard2016_finegrained, Henzinger2015, HenzingerL0W2017_finegrained, KarczmarzL2015_finegrained, KopelowitzK2016-finegrained,  Kopelowitz2016, Probst2018-finegrained, BrandNS2019-finegrained, DBLP:conf/focs/WilliamsX20}. We mention some of the expansions on P\u{a}tra\c{s}cu's work relevant to this paper in Section~\ref{section:multiphase}. In a previous paper by the authors \cite{arxiv_geom_lb}, P\u{a}tra\c{s}cu's work and these expansions were exploited to obtain conditional polynomial lower bounds for a variety of dynamic geometric problems. The lower bounds obtained here were inspired by this previous work.

After reviewing the Tetris literature in Section~\ref{s:related}, and the definitions of the hardness conjectures in Section~\ref{section:hardness-conjectures},
 we reduce the Multiphase problem to the conception of an efficient RDDS in Section~\ref{s:reduction}, implying that under the OMv conjecture \cite{Henzinger2015}, any RDDS requires essentially $n^{1/2}$ time per operation. This reduction also implies (weaker, but still polynomial assuming an empty RDDS can be initialized quickly) lower bounds from the 3-SUM conjecture and the APSP conjecture (see Section~\ref{section:multiphase} for definitions of these hardness conjectures). We also give in Section~\ref{s:ds} a data structure matching the lower bound obtained from the OMv conjecture (up to an $n^{o(1)}$ factor).

\subsection{Related work}\label{s:related}

Tetris, created in 1984 by Alexey Leonidovich Pajitnov of the  Dorodnitsyn Computing Centre of the Soviet Academy of Science, was one of the greatest inventions of the USSR, one that was widely embraced by the citizens of both the Warsaw Pact and NATO as well as non-aligned countries.
In the computer science literature, the allure of Tetris has been irrepressible, with works using Tetris found in virtually every subdiscipline.

One might be tempted to begin reviewing the literature with the paper whose title is the one single word ``Tetris''  \cite{DBLP:journals/corr/abs-1811-08614}. 
However, one will quickly realize that the Tetris that they speak of is not the game so beloved by humanity, but rather ``Tetris is an Asynchronous Byzantine Fault Tolerance consensus algorithm designed for next generation high-throughput permission and permissionless blockchain.'' 
In continuing though the literature, one quickly realizes that the name ``Tetris'' has been appropriated mercilessly to entitle and describe things that are not directly about the famed game, but are either meant to invoke the idea of putting pieces together or are just an acronym that the authors found cute, no doubt in the hopes that a small fraction of the shine of the real and true Tetris might befall upon their work \cite{
    DBLP:conf/mig/AdamsonBMSJKT14,
    DBLP:conf/iticse/Black17,
    DBLP:journals/adcm/CasazzaP14,
    DBLP:journals/biosystems/Cowley19,
    DBLP:journals/corr/DoblerR17,
    DBLP:journals/dam/GalliSS01,
    DBLP:conf/asplos/GaoPYHK17,
    DBLP:conf/scopes/GoensKCHSH17,
    DBLP:conf/uss/0001DKBCPH20,
    DBLP:conf/isaac/KutylowskiW97,
    DBLP:journals/tmi/LeePPGS19,
    DBLP:conf/icpp/LiWFHTLL16,
    DBLP:conf/iwqos/LingYWY16,
    DBLP:conf/nana/LiuST21,
    DBLP:conf/iccad/LuWLYL18,
    DBLP:journals/jct/Lupini17,
    DBLP:journals/corr/abs-1811-06841,
    DBLP:conf/gvd/Markl98,
    DBLP:conf/icde/MarklZB99,
    DBLP:conf/sigcse/Schocken17,
    DBLP:conf/mass/StaffordRWCAD17,
    DBLP:conf/eurosys/TumanovZPKHG16,
    DBLP:conf/dac/WestZDFKCW19,
    DBLP:journals/tc/WestZDKFCW20,
    DBLP:conf/chiir/Wilson17,
    DBLP:conf/lctrts/XuT07,
    DBLP:journals/taco/XuT09,
    DBLP:conf/IEEEcloud/ZhangRECSWP11}.
Perhaps the most useful of the Tetris-inspired research has been the creation of a floor-cleaning robot which is hinged and can change shape to various Tetris pieces \cite{DBLP:conf/icra/PrabakaranEPN17,
    DBLP:journals/sensors/LePSM18,
    DBLP:journals/access/PrabakaranEPV18}.

As for the game itself, the most research in the computer science community, unsurprisingly, is about getting computers to play Tetris.
There, Tetris has been used as a research tool more often than any other video game \cite{DBLP:journals/ets/Pinnell15} and has been subjected to just about every machine learning and artificial intelligence paradigm 
\cite{DBLP:journals/corr/abs-1905-01652,
    DBLP:conf/lwa/BohmKM04,
    DBLP:conf/cig/Boumaza09,
    DBLP:conf/gecco/Boumaza11,
    DBLP:conf/ae/Boumaza11,
    DBLP:conf/bracis/SilvaP17,
    DBLP:conf/nips/GabillonGS13,
    DBLP:conf/gecco/GillespieGS17,
    DBLP:conf/cogsci/GittelsonLSG14,
    DBLP:conf/esann/GrossFS08,
    DBLP:conf/gecco/JaskowskiSLK15,
    DBLP:conf/atal/KnoxS10a,
    DBLP:conf/cec/LangenhovenHE10,
    DBLP:journals/sensors/LePSM18,
    DBLP:conf/icml/LichtenbergS19,
    DBLP:conf/dcai/Marin-LoraCS21,
    DBLP:journals/corr/abs-2004-00377,
    DBLP:conf/cig/Muller-Brockhausen21,
    DBLP:conf/iconip/Phon-Amnuaisuk14,
    DBLP:conf/inns-wc/Phon-Amnuaisuk14,
    DBLP:conf/kes/Phon-Amnuaisuk15,
    DBLP:conf/nafips/PickeringC20,
    DBLP:conf/flairs/RomdhaneL08,
    DBLP:conf/setn/RovatsouL10,
    DBLP:journals/jmlr/Scherrer13,
    DBLP:journals/jmlr/ScherrerGGLG15,
    DBLP:conf/gecco/Schrum18,
    DBLP:conf/iclp/SchullerW15,
    DBLP:conf/cogsci/SibertG17,
    DBLP:conf/cogsci/SibertGL15,
    DBLP:journals/neco/SzitaL06,
    DBLP:conf/gi/Thanh16,
    DBLP:conf/annpr/ThiamKS14,
    DBLP:phd/hal/Thiery10,
    DBLP:journals/icga/ThieryS09,
    DBLP:journals/ria/ThieryS09,
    DBLP:journals/icga/ThieryS09a} 
    including, of course, ant colonies \cite{DBLP:conf/gecco/ChenWWSG09}.
However, success is far from guaranteed, as \emph{The Unsuitability of Supervised Backpropagation Networks for Tetris} shows \cite{DBLP:journals/aans/LewisB15}.

Researchers have discovered that in addition to computers, humans can also play Tetris. They have studied how humans interact as they play Tetris and how this relates to primate socialization in general, observing that  ``baboon social intelligence contrasts both with that of Tetris players and human infants, because both have the ability to provoke epistemic effects'' \cite{DBLP:journals/connection/CowleyM06}.
Other human angles have been studied as well, including 
studying styles of play \cite{DBLP:conf/cogsci/BerryG19,DBLP:conf/hicss/KeatingLTH17}, humanity's suboptimal choices \cite{DBLP:conf/cogsci/SibertG19}, eye movement while playing \cite{DBLP:conf/bcshci/JermannNL10,DBLP:conf/icmi/LiNJ10,
DBLP:conf/mmsp/LiNJ11,DBLP:conf/chiplay/SpielBK19},
and how champion Tetris players play, including the relationship between Hick’s Law and expertise \cite{DBLP:conf/cogsci/LindstedtG20}, and social exertion \cite{DBLP:conf/intetain/MastV16}.
Tetris has been used as well to develop a model of cognition \cite{DBLP:conf/iccm/VekslerG04}.
Additionally, a ``Brain Computer Interface (BCI) game that is inspired [by] the Tetris game'' was created where they perform experiments with children with attention deficit and hyperactivity disorder (ADHD) \cite{DBLP:conf/segah/PiresTCNC11} (see also \cite{DBLP:conf/icmi/KrolFZ17,DBLP:conf/um/PatsisSVT13}).

Several kinds of researchers have worked on helping humans to play Tetris better \cite{DBLP:conf/acg/OikawaHI19} and addressed the all-important work on the influence of the tempo of the music one listens to while playing Tetris on the score
 \cite{DBLP:conf/cig/HufschmittCJ20,DBLP:conf/chiplay/HufschmittCJ20,DBLP:conf/iui/HufschmittCJ21}.
Other researchers worked to study human performance in Tetris, including the elements that can be used to predict whether a human will become a Tetris master \cite{DBLP:conf/cosecivi/Ariza15,
DBLP:conf/cogsci/GrayPLSJMAA14,
DBLP:conf/cogsci/LindstedtG13,
DBLP:conf/cogsci/SibertGL15,
DBLP:conf/cogsci/SibertLG14}.
The issue of dynamically adjusting the difficulty of the game has also been the subject of study \cite{
DBLP:conf/iccbr/ArizaSG17,
DBLP:conf/iccbr/ArizaSG19,
DBLP:conf/cosecivi/LoraSG16,
DBLP:conf/flairs/LoraSGG16,
DBLP:conf/chi/SpielBK17}.

One great step in the development of Tetris was the creation of a mobile version of ``the stable and reliable game based on the mature game algorithm'' \cite{DBLP:conf/icica/WangCL11}. 
Also of note was  \emph{Pop-out Tetris}
\cite{DBLP:conf/jcsse/KongsilpCK18} where they applied the ``Fish Tank Virtual Reality technique to a commodity tablet'' as well as an embedded system created for Tetris \cite{DBLP:conf/csreaESA/PangPT10}. The long history of Tetris and its many implementations has led one group to study these implementations over time, deeming them ``worthy of historical analysis'' \cite{DBLP:conf/digra/Jordan09}.
    
From an education perspective, Tetris was used by \cite{DBLP:conf/digitel/Chen10} to help students learn Chinese proverbs by putting Chinese characters on each Tetris block. 
Within mainstream computer science education, Tetris is a common subject of programming projects
    \cite{DBLP:journals/sigcse/Parlante01}.

 Tetris has also spawned its own fan fiction, which has been studied: ``In [one work] there are also racial and gender complications, the story ending on a humorous note with an absurd twist, as in a Samuel Beckett play, when a message appears over the head of the disillusioned blocks: `Play again?'. It is obvious that in these stories the game of Tetris is an allegory for human life and behavior'' \cite{DBLP:conf/digra/RambuschSEW09}.
 From the point of view of a narrative, ``An experiment was made with the game  Tetris,  showing  that  a  control  system designed  around  evaluation  of  player’s  tension [makes] it possible to obtain an execution trace [close] to a narrative one'' \cite{DBLP:journals/ijigs/DelmasCA08}.

Following the immense success of cannabis legalization, Tetris has followed suit with Enhanced Tetris Legalization 
\cite{DBLP:conf/pci/DadaliarisNKOLS16,
DBLP:conf/seeda/DadaliarisNOHTA16,
DBLP:journals/jolpe/DadaliarisOKNHG17,
DBLP:conf/mocast/OikonomouDLKS18}.

A surprisingly active field has been the ability of a Tetris player to  hide messages via gameplay \cite{DBLP:conf/icmlc/OuC11,DBLP:journals/isci/OuC14,DBLP:journals/iet-ipr/SuCLY20,DBLP:journals/iet-ipr/X21}.

On the more theoretical side,
it has been shown how to view the game through the lens of Krohn-Rhodes theory \cite{DBLP:journals/corr/abs-2004-09022},
how to use mixed integer programming to pack Tetris-like pieces in 3D \cite{DBLP:journals/4or/Fasano04},
 how to create Tetris boards with a given pattern \cite{DBLP:journals/ijigs/HoogeboomK04}, 
 and how to generate polyominoes \cite{DBLP:journals/endm/FormentiM17}.
 In \cite{DBLP:journals/dm/BaccheriniM08}, they
 ``propose a straight Coward algorithm to transform a Tetris-like game into its corresponding automaton.''

 Closer to the topic of this paper is the algorithmic work relating to the complexity of playing Tetris.
 The starting point for this consists of results about the hardness of approximating various Tetris objectives (e.g.~number of rows cleared, number of moves before reaching a certain height), even with foreknowledge of the entire sequence of pieces \cite{DBLP:journals/ijcga/BreukelaarDHHKL04,
    DBLP:journals/corr/cs-CC-0210020,
    DBLP:conf/cocoon/DemaineHL03}.
This work was generalized to show $\NP$-completeness for $k$-ominos with $k\geq 4$, and surprisingly also show that it is $\NP$-complete to clear the board even for $k=2$ when rotations are not allowed \cite{DBLP:journals/jip/DemaineDEHLLY17}.
These reductions were further tightened in \cite{DBLP:journals/jip/AsifCDDHLS20,DBLP:journals/corr/abs-2009-14336} to when the board is small and in \cite{DBLP:journals/corr/Temprano15} for the variant where you can rotate and then move down, but when no rotations are allowed after the block has begun to fall.
The undecidability of whether a set of sequences of Tetris moves described by a regular language contains a sequence that results in an empty board has been proven \cite{DBLP:journals/ipl/HoogeboomK04}.
While this all sounds very negative, a more positive outlook can be found in \emph{Why Most Decisions Are Easy in Tetris} \cite{DBLP:conf/icml/SimsekAK16}.
          
Unrelated to Tetris, the problem studied is a form of dynamic \emph{reverse range query} \cite{DBLP:journals/entcs/Takaoka03}. In a standard range query one has a collection of objects which are preprocessed such that given a range, some question about the objects that intersect the range can be answered. The literature on range queries is immense, with variants depending on the type of objects, the type of ranges, and the specific queries (see, for example, the chapter on range queries in \cite{DBLP:reference/cg/2004}). Classically, what is interesting is that for points and queries that ask how many points intersect the range, the answer is polylog for orthogonal queries and polynomial for non-orthogonal queries, assuming reasonable space.  
In a reverse range query, we are looking for the ranges that would give a certain result, and to provide the extreme such range under some measure; if there is only a single such reverse range query without parameters, this is interesting as a data structure only when the point set is changing dynamically.

For example, the classic problem of finding the smallest circle enclosing a set of points in the plane can be viewed as such a reverse range query, where one is searching for a circular range query that returns $n$ points, and among all such queries is looking for the smallest one; in the dynamic case the best known data structure takes polylogarithmic time, even when deletions are allowed \cite{DBLP:journals/dcg/Chan20a}. 
On the other hand, the best known data structure for maintaining the largest empty circle in a set of points (with center constrained to be inside a fixed triangle) takes time $\tilde{O}(n^{7/8})$ \cite{DBLP:journals/siamcomp/Chan03} (or $\tilde{O}(n^{11/12})$ when deletions are also allowed \cite{DBLP:journals/dcg/Chan20a}). 

For the case of finding (not necessarily axis-aligned) rectangular ranges enclosing all points, the best known data structure takes time $\tilde{O}(n^{1/2})$ to find the minimum-perimeter range and  $\tilde{O}(n^{5/6})$ for the minimum-area range (also \cite{DBLP:journals/siamcomp/Chan03}).
No conditional lower bounds are known for these problems, which remains a major open problem.
In our case we are looking for a three-sided upwards-facing query that returns zero objects,\footnote{Although we have presented our results as a board containing rectangles, our upper and lower bounds also hold for points.} and among all such queries to find the one that is as low as possible.  

\section{A few hardness conjectures}\label{section:hardness-conjectures}

Let us quickly go over some of the key conjectures in fine-grained complexity which will be relevant here. The first, and arguably one of the earliest in the field (with the seminal work of Gajentaan and Overmars \cite{Gajentaan1995}), is the 3-SUM conjecture.
\begin{definition}[3-SUM conjecture]
The following problem (3-SUM) requires $n^{2-o(1)}$ expected time to solve:
given a set of $n$ integers, decide if three of them sum up to $0$.
\end{definition}

The 3-SUM problem can be easily solved in $O(n^2)$ time, and with a lot of effort this can be slightly improved, both for integer and real inputs \cite{DBLP:journals/algorithmica/BaranDP08, DBLP:journals/talg/Chan20, Gronlund2014}.

The second problem we will mention is All-Pairs-Shortest-Path (APSP).
\begin{definition}[APSP conjecture]
The following problem (APSP) requires $n^{3-o(1)}$ expected time to solve:
given a weighted directed graph $G$ with no negative cycles, compute the distance between every pair of vertices in $G$.
\end{definition}
Here, the fastest known algorithm runs in slightly subcubic time \cite{DBLP:journals/siamcomp/Williams18}.
Both problems (3-SUM and APSP) are related in many ways (see e.g., \cite{DBLP:conf/focs/WilliamsX20}). One particular way in which they are related is that they both reduce in a fine-grained manner to the so-called Exact Triangle problem.
In particular, if the following conjecture is false, then both the 3-SUM conjecture and the APSP conjecture are false.
\begin{conjecture}[Exact Triangle conjecture]
The following problem (Exact Triangle) requires $n^{3-o(1)}$ expected time to solve:
given a weighted graph $G$ and a target weight $T$, determine if there is a triangle in $G$ whose edge weights sum to $T$. 
\end{conjecture}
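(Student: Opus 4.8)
The difficulty to confront up front is that the final statement is phrased as a \emph{conjecture} asserting an \emph{unconditional} lower bound of $n^{3-o(1)}$, and no such unconditional bound is known or expected to be provable: establishing any super-linear unconditional time lower bound for a problem lying in $\P$ on the word-RAM model is beyond all current techniques and would constitute a major breakthrough in complexity theory. So the honest plan is not to prove the statement but to justify adopting it as a hardness assumption, which comes down to two things --- (i) checking that the exponent $3$ is correct by matching it against the best known algorithm, and (ii) arguing plausibility by tying Exact Triangle to the older and more heavily scrutinized 3-SUM and APSP conjectures.

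For (i), I would record the trivial $O(n^3)$ upper bound: ranging over all triples $(u,v,z)$ and testing whether $w(u,v)+w(v,z)+w(z,u)=T$ decides Exact Triangle, so the conjectured bound is within an $n^{o(1)}$ factor of optimal in the worst case. I would then observe that the only known speedups are subpolynomial, coming from the same techniques that give the slightly subcubic APSP algorithm of \cite{DBLP:journals/siamcomp/Williams18}; these save roughly a $2^{\Theta(\sqrt{\log n})}$ factor but keep the exponent at $3$. This is precisely the polynomial gap the conjecture formalizes, so the statement is tight up to the $o(1)$ in the exponent.

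For (ii), which is the only part that can actually be made rigorous, I would invoke the fine-grained reductions behind the sentence immediately preceding the statement: an $O(n^{3-\epsilon})$ algorithm for Exact Triangle would yield an $O(n^{2-\epsilon'})$ algorithm for 3-SUM (via a tripartite gadget in which zero-sum triples map to triangles of weight $T$) and an $O(n^{3-\epsilon'})$ algorithm for APSP (via the equivalence between $(\min,+)$-products and negative/exact triangle detection), so that falsifying the Exact Triangle conjecture falsifies both. This is \emph{not} a proof of the lower bound, but it shows the assumption is implied by --- hence weaker than --- two of the most robust conjectures in the field. The main obstacle is thus conceptual and unavoidable: the bound as literally worded cannot be proven with known methods, so the real deliverable is this pair of tightness and plausibility arguments rather than a self-contained proof.
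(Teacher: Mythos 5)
You read the situation correctly: this statement is a hardness \emph{conjecture}, not a theorem, and the paper offers no proof of it --- only the same justification you give, namely that both 3-SUM and APSP fine-grained reduce to Exact Triangle (citing \cite{DBLP:conf/focs/WilliamsX20}), so that refuting the conjecture would refute both of those older, more scrutinized conjectures. Your additional tightness observation (the trivial $O(n^3)$ algorithm, with only subpolynomial savings known via the techniques behind the slightly subcubic APSP algorithm of \cite{DBLP:journals/siamcomp/Williams18}) is fully consistent with the paper's surrounding discussion, so your treatment matches the paper's.
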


The following problem and the corresponding conjecture were introduced by Henzinger et al. \cite{Henzinger2015} as a way to unify most known conditional lower bounds for dynamic problems under a single conjecture, and sometimes even strengthen the known bounds.
\begin{definition}[Online Boolean Matrix-Vector Multiplication (OMv)]
We are given an $n\times n$ boolean matrix $M$. We can preprocess this matrix, after which we are given a sequence of $n$ boolean column-vectors of size $n$ denoted by $v_1,...,v_n$, one by one. After seeing each vector $v_i$, we must output the product $M v_i$ before seeing $v_{i+1}$.
\end{definition}
\begin{conjecture}[OMv conjecture]
Solving OMv requires $n^{3-o(1)}$ expected time. 
\end{conjecture}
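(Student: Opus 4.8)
The statement is a \emph{conjecture}, so strictly speaking there is nothing to prove here and the paper will simply assume it. Nevertheless, let me describe the only natural route toward an actual proof and make explicit where it collapses, since this is exactly why OMv is taken as a hypothesis rather than a theorem. Proving it would mean establishing an \emph{unconditional} lower bound of $n^{3-o(1)}$ on the total expected running time of \emph{every} (possibly randomized) algorithm for OMv in the word-RAM model with $\Theta(\log n)$-bit words.

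First I would pin down the target and the model. The naive algorithm answers each of the $n$ online vectors by an $O(n^2)$ Boolean matrix-vector product, for $O(n^3)$ total time, while Larsen and Williams give a slightly subcubic bound of $n^3/2^{\Omega(\sqrt{\log n})}$; hence any proof must (i) tolerate exactly the $n^{o(1)}$ slack absorbed by the $o(1)$ exponent and (ii) rule out every \emph{truly} subcubic algorithm, i.e.\ one running in $n^{3-\epsilon}$ for some fixed $\epsilon>0$. After polynomial preprocessing this amounts to an amortized lower bound of $n^{2-o(1)}$ per query, which it suffices to prove in the cell-probe model (counting only cell accesses), since such a bound transfers verbatim to word-RAM time.

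The natural approach is to recast OMv as an online data-structure problem: treat the preprocessed matrix $M$ as the encoded memory, and each $v_i$ as a query whose answer $Mv_i$ must be produced before $v_{i+1}$ is revealed, with the crucial feature that the $v_i$ are chosen adaptively (this adaptivity is what blocks any batching speedup through fast matrix multiplication). I would then try to lower-bound the information that must flow through the probed cells while answering the adaptive sequence, via one of the standard tools: the chronogram method of Fredman and Saks, the information-transfer technique of P\u{a}tra\c{s}cu and Demaine, or a direct reduction to an asymmetric randomized communication problem (lopsided set-disjointness) handled by the richness method, with the aim of forcing $\Omega(n^2)$ amortized probes per query against a suitable hard input distribution.

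The hard part -- and the reason this remains open -- is that no known technique can carry this out. The strongest unconditional cell-probe lower bounds for explicit dynamic problems, due to Larsen, are only of order $(\log n/\log\log n)^2$ per operation, which is exponentially short of the $n^{2-o(1)}$ amortized bound OMv demands; indeed, proving any \emph{polynomial} cell-probe lower bound for an explicit problem is a notorious open frontier whose resolution would entail new circuit and communication lower bounds. For this reason the genuine ``evidence'' for the conjecture is not a derivation from first principles but its tight web of connections to other hardness assumptions: Henzinger et al.\ \cite{Henzinger2015} show that it implies hardness of the simpler one-round variant OuMv (the form used in most reductions, including ours), and it is widely believed to lie beyond the reach of purely combinatorial techniques, in line with the combinatorial Boolean matrix multiplication conjecture. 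Accordingly, the correct expectation is that the paper does \emph{not} prove this statement but assumes it, invoking it only as the premise of the conditional lower bound for the RDDS.
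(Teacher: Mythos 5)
You correctly recognize that the paper offers no proof of this statement: it is stated as a conjecture (following Henzinger et al.~\cite{Henzinger2015}) and invoked only as a hypothesis for the conditional lower bound on RDDS operations, exactly as you conclude. Your accurate framing of why an unconditional proof is out of reach matches the paper's treatment, which simply assumes the conjecture, so there is nothing further to compare.
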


\subsection{P\u{a}tra\c{s}cu's Multiphase Problem}\label{section:multiphase}

The following problem was introduced by P\u{a}tra\c{s}cu \cite{Patrascu2010} as a means to prove polynomial lower bounds for many dynamic problems.
\begin{definition}[Multiphase Problem]
In the Multiphase Problem, we are given a family $\mathcal{F}=\{F_1, \ldots, F_k\}$ of $k$ non-empty subsets of $\{1,2,\ldots, m\}$, such that every element appears in at least one of the subsets. Let $n=m\cdot k$. We want to design a data structure $D$ which maintains a set of $O(n)$ objects and allows for the following:
\begin{itemize}
    \item (Step 1) First, we read $\mathcal{F}$ and are allowed $O(t_1 \cdot n)$ expected time to initialize $D$.
    \item (Step 2) Then, we receive a subset $J \subset \{1,2,\ldots, m\}$ and are allowed $O(t_2 \cdot m)$ expected time to modify $D$.
    \item (Step 3) Finally, we are given an index $1\leq i\leq k$ and after $O(t_3)$ expected time we decide if $J \cap F_{i} \neq \emptyset$.
\end{itemize}
\end{definition}

P\u{a}tra\c{s}cu conjectured that there exist constants $\gamma > 1$ and $\delta > 0$ such that if $k=\Theta(m^\gamma)$ then the Multiphase problem requires $t_1+ t_2+ t_3 \geq n^{\delta}$. While this is still unknown, he showed that this is true if we assume the 3-SUM conjecture. Specifically, he showed the following (taking $k=\Theta(m^{5/2})$).
\begin{theorem}[\cite{Patrascu2010}]
Under the 3-SUM conjecture, the Multiphase Problem requires $t_1+ t_2+ t_3 \geq n^{1/7-o(1)}$.\footnote{Note that here we have expressed the bound in terms of $n=k\cdot m$ whereas P\u{a}tra\c{s}cu originally expressed his bound in terms of what we call $m$ in this paper.}
\end{theorem}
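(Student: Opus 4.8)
The plan is to prove the contrapositive: I will reduce 3-SUM to the Multiphase problem, so that a Multiphase data structure achieving $t_1+t_2+t_3 = n^{1/7-o(1)}$ could be used to solve 3-SUM in $o(N^2)$ expected time, contradicting the 3-SUM conjecture. Throughout I work with the standard tripartite reformulation: given sets $A,B,C$ of $N$ integers each, decide whether there exist $a\in A$, $b\in B$, $c\in C$ with $a+b=c$. This version is equivalent to the original up to constant factors and retains the $N^{2-o(1)}$ lower bound, and its three roles $A,B,C$ map naturally onto the three phases of the Multiphase problem.

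First I would tame the universe with an almost-linear, almost-3-SUM-preserving hash function $h$ into a range $[m]$ of buckets, as provided by the hashing of Baran, Demaine and P\u{a}tra\c{s}cu. The crucial property is that whenever $a+b=c$ one has $h(a)+h(b)\equiv h(c)+\sigma \pmod m$ for some carry $\sigma$ drawn from a constant-size set, so that every genuine solution becomes visible as a collision among a constant number of \emph{predictable} residues, while the expected number of spurious collisions inside any bucket triple stays small. This partitions each of $A,B,C$ into $m$ buckets of expected size $O(N/m)$ and reduces the search to testing, for each compatible pair of source buckets, whether the determined target bucket of $C$ is genuinely matched.

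Next I would encode this as a single Multiphase instance. The universe $[m]$ is the set of buckets; in Step~1 the family $\mathcal{F}=\{F_1,\dots,F_k\}$ records, over a balanced partition of the relevant bucket interactions into $k$ groups, which target buckets each group can reach; the subset $J$ handed over in Step~2 encodes the occupied buckets of $C$; and in Step~3 the disjointness tests $J\cap F_i\neq\emptyset$ flag exactly the groups containing a candidate. Since queries are read-only, all $k$ of them can be answered after one Step~2, so the reduction costs $O(t_1 n)+O(t_2 m)+O(k\,t_3)$ plus the time to build the instance and to verify candidates. Every flagged candidate is then checked directly against the original integers, and the hashing is what guarantees that the number of candidates, and hence the verification time, is $\tilde{O}(N)$ in expectation.

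Finally I would fix the parameters. Taking $k=\Theta(m^{5/2})$ gives $n=mk=\Theta(m^{7/2})$, and tying the 3-SUM size to $N=\Theta(m^2)$ yields $N^{2-o(1)}=m^{4-o(1)}$; since $N^{2-o(1)}\le O(t_1 n)+O(t_2 m)+O(k\,t_3)\le (t_1+t_2+t_3)\cdot O(n)$, this rearranges to $t_1+t_2+t_3\ge m^{1/2-o(1)}=n^{1/7-o(1)}$. The main obstacle I expect is the cost accounting hidden in this last step: because $n=\Theta(m^{7/2})$ is much smaller than $N^2=\Theta(m^4)$, I must build $\mathcal{F}$ and the set $J$ in time $\tilde{O}(n)$ rather than the $\Theta(N^2)$ that naive sumset enumeration would incur, which is precisely why the groups cannot simply be bucket pairs and why the exponent $5/2$ in $k=\Theta(m^{5/2})$ is forced. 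Simultaneously I must argue that hash-induced false positives are rare enough that candidate verification stays in $o(N^2)$; balancing these two competing constraints against the $O(t_1 n)$, $O(t_2 m)$ and $O(k\,t_3)$ budgets is what pins down the final exponent $1/7$.
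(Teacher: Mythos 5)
First, a point of order: the paper does not prove this statement; it is quoted verbatim from P\u{a}tra\c{s}cu \cite{Patrascu2010} (the only ``proof'' in the paper is the citation plus the footnote translating $m^{1/2-o(1)}$ into $n^{1/7-o(1)}$ via $n=\Theta(m^{7/2})$). So your proposal has to be judged against P\u{a}tra\c{s}cu's actual argument, and there it has a genuine gap. Your reduction uses a \emph{single} Step~2 set $J$, defined as the set of occupied buckets of $C$, with $N=\Theta(m^2)$ integers hashed into $m$ buckets. But then every bucket of $C$ is occupied with overwhelming probability, so $J=\{1,\dots,m\}$, every disjointness query answers ``not disjoint,'' the queries carry no information, and the candidate verification you defer to at the end costs $\Theta(N^2)$. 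More structurally: all $k$ answers are functions of the single set $J\subseteq[m]$, i.e.\ of at most $m=\sqrt{N}$ bits of information about $C$, while $\mathcal{F}$ never sees $C$ at all; no choice of ``groups'' can rescue this. The vagueness of ``a balanced partition of the relevant bucket interactions into $k$ groups'' hides exactly this problem, and a flagged group in any case does not localize which pair inside it is the candidate, so even the verification step is not $\tilde{O}(N)$ as claimed.

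The actual proof routes 3-SUM through Convolution-3SUM and then a \emph{block decomposition}: the $N$ positions are cut into $N/t$ blocks of size $t$, the universe of the Multiphase instance is a hash range of size $m=\Theta(t)$ (so each block occupies only a constant fraction of the buckets and disjointness is informative), Step~1 encodes one set per block, and --- crucially --- Step~2 is invoked $\Theta(N/t)$ \emph{times}, once per block of the second array, each invocation followed by $\Theta(N/t)$ Step~3 queries. The data structure is built once and reused across all these rounds; that reuse is the entire point of the Multiphase problem. The cost accounting is then $t_1\cdot mk+(N/t)\cdot t_2\cdot m+(N/t)^2\cdot t_3<N^{2-\Omega(1)}$ with $N=\Theta(n)=\Theta(m^{7/2})$ (not $N=\Theta(m^2)$), and balancing the last term against $N^2$ is what forces $k=\Theta(m^{5/2})$ and yields the exponent $1/2$ in $m$, i.e.\ $1/7$ in $n$. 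Your final arithmetic lands on the right exponent only because the parameters were reverse-engineered from the answer; the reduction as described would not run in subquadratic time.
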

The techniques of P\u{a}tra\c{s}cu also allow to have different trade-offs in the lower bounds for $t_1$, $t_2$ and $t_3$. His results have been expanded upon in different ways. On the one hand, Kopelowitz et al.\cite{Kopelowitz2016} have tightened the reduction from 3-SUM to Set Disjointness (which is an intermediate problem in the reduction from 3-SUM to the Multiphase problem). Their results in particular imply the following:
\begin{theorem}[\cite{Kopelowitz2016}]
Under the 3-SUM conjecture, for any $0<\gamma<1$, the Multiphase Problem requires 
\[t_1 \cdot n + t_2\cdot n + t_3\cdot n^\frac{1+\gamma}{3-2\gamma} \geq n^{\frac{2}{3-2\gamma}-o(1)}.\]
\end{theorem}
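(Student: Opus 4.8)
The plan is to establish this conditional lower bound as a chain of fine-grained reductions $\text{3-SUM} \to \text{Convolution-3-SUM} \to \text{Set Disjointness} \to \text{Multiphase}$, so that a Multiphase data structure beating the claimed trade-off would yield an $N^{2-\Omega(1)}$-time algorithm for 3-SUM, contradicting the conjecture. Let the starting 3-SUM instance have $N$ integers; the free parameter $\gamma$ will govern the granularity of the hashing below, and hence the shape $k = \Theta(m^{g(\gamma)})$ of the Multiphase instance we build, where $n = mk$. First I would pass to the equivalent question of finding $a \in A$, $b \in B$, $c \in C$ with $a+b=c$ and $|A|=|B|=|C|=N$, which is standard.

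The engine of the reduction is P\u{a}tra\c{s}cu's almost-linear hash $h$ into $R$ buckets $[R]$: one can take $h$ so that it sends any fixed set of $N$ integers into buckets of expected size $O(N/R)$ while satisfying $h(a)+h(b)-h(a+b) \in \{0,\ldots,O(1)\}$ for all $a,b$. Hence any genuine solution $a+b=c$ falls into a bucket triple $(h(a),h(b),h(c))$ whose indices obey a linear relation up to a constant additive offset, so once the $A$- and $C$-buckets are fixed there are only $O(1)$ admissible $B$-buckets, leaving only $O(R^2)$ relevant triples overall. Detecting a 3-SUM solution then reduces to deciding, for these $O(R^2)$ triples, whether the subinstance living inside a triple (each part of size $\approx N/R$) actually contains a witness.

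Next I would encode this batch of triple tests as a Set Disjointness / Multiphase instance, following P\u{a}tra\c{s}cu: build the family $\mathcal{F} = \{F_1,\ldots,F_k\}$ from the elements of $A$ grouped by bucket and the query set $J$ from the elements of $B$, over a universe $[m]$ whose elements encode the relevant ``sum coordinates'' inside a triple, so that $J \cap F_i \neq \emptyset$ holds exactly when the $i$-th candidate triple contains a genuine $a+b=c$. Step~1 preprocesses $\mathcal{F}$ once at cost $O(t_1 n)$; each Step~2 loads a query set $J$ at cost $O(t_2 m)$; and running Step~3 for all $k$ indices (cost $k\, t_3$) resolves the triples covered by that phase. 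Summed over the query phases needed to exhaust all $O(R^2)$ candidate triples, and choosing the parameters so that $n = mk$, the total time to solve the 3-SUM instance this way is $O\bigl(t_1 n + t_2 n + t_3\, n^{(1+\gamma)/(3-2\gamma)}\bigr)$ plus near-linear overhead for hashing and assembling the instance.

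Finally, the crux is the parameter balancing: I would pick $R$ (equivalently the bucket size, controlled by $\gamma$) and set $m,k$ with $k = \Theta(m^{g(\gamma)})$, then equate the displayed running time against the $N^{2-o(1)}$ 3-SUM lower bound. Making the exponents come out to exactly $\frac{2}{3-2\gamma}$ on the right-hand side and $\frac{1+\gamma}{3-2\gamma}$ as the coefficient of $t_3$ is the delicate step, and it is precisely where Kopelowitz et al.\ sharpen P\u{a}tra\c{s}cu's original analysis: one must bound the expected number and the total size of the candidate triples tightly (so the amortized cost is controlled in expectation, matching the ``expected time'' in the statement) and check that no choice of $\gamma \in (0,1)$ leaks a faster 3-SUM algorithm. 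I expect this number-theoretic and combinatorial accounting of the hash --- rather than the encoding itself --- to be the main obstacle, since it is what pins down whether the resulting trade-off is the claimed one and whether it holds across the full range $0 < \gamma < 1$.
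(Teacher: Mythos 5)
First, a point of reference: the paper does not prove this statement at all --- it is imported verbatim from Kopelowitz et al.\ \cite{Kopelowitz2016} as a black-box consequence of their tightened reduction from 3-SUM to Set Disjointness, composed with P\u{a}tra\c{s}cu's reduction from Set Disjointness to the Multiphase problem. So there is no in-paper argument to compare against; your proposal has to be judged against the actual proof in the literature. On that score, your architecture is the right one and matches what Kopelowitz--Pettie--Porat do: hash the 3-SUM instance with an almost-linear hash into $R$ buckets, observe that only $O(R^2)$ bucket triples can contain a witness, encode the resulting batch of small disjointness tests as one Step-1 preprocessing plus many Step-2/Step-3 phases of a Multiphase structure, and balance parameters. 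Your identification of where the improvement over P\u{a}tra\c{s}cu lives (the probabilistic accounting of bucket sizes and of the candidate triples, not the encoding) is also accurate.

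The genuine gap is that you never carry out the step that \emph{is} the theorem. The specific exponents $\frac{1+\gamma}{3-2\gamma}$ and $\frac{2}{3-2\gamma}$ are the entire content of the statement; the surrounding reduction architecture, taken alone, only yields P\u{a}tra\c{s}cu's much weaker $n^{1/7}$-type bound. You explicitly defer the parameter balancing (``the delicate step,'' ``the main obstacle''), which means the proposal establishes at best that \emph{some} polynomial trade-off exists, not the claimed one. Concretely, what is missing is: (i) the choice of $R$ as a function of $N$ and of the Multiphase shape parameter (so that the universe size $m$, the number of sets $k$, and the number of phases are simultaneously consistent with $n = mk$ and with exhausting all $O(R^2)$ triples); (ii) the bound on the expected total size of the buckets needed to make the ``expected time'' accounting close; and (iii) the verification that equating $t_1 n + t_2 n \cdot (\text{number of phases})/(\text{phases per structure}) + t_3 k \cdot O(R^2)$ against $N^{2-o(1)}$ really produces the displayed inequality for every $\gamma \in (0,1)$ rather than for a restricted range. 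Without (i)--(iii) the proof is an outline of the right shape with its quantitative core left blank.
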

On the other hand, the hardness of the Multiphase Problem has been shown from other hardness conjectures. Vassilevska Williams and Xu \cite{DBLP:conf/focs/WilliamsX20} have fine-grained reduced Set Disjointness to the Exact Triangle problem. Their result implies in particular that in the previous theorem, one can replace the 3-SUM conjecture by the Exact Triangle conjecture, giving the following.
\begin{theorem}[\cite{DBLP:conf/focs/WilliamsX20}]
Under the Exact Triangle conjecture (and in particular under either the 3-SUM conjecture or the APSP conjecture), for any $0<\gamma<1$, the Multiphase problem requires 
\[t_1 \cdot n + t_2\cdot n + t_3\cdot n^\frac{1+\gamma}{3-2\gamma} \geq n^{\frac{2}{3-2\gamma}-o(1)}.\]
\end{theorem}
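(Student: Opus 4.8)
The plan is to obtain this theorem not by reworking the entire chain of reductions from scratch, but by isolating the single component in the previous (Kopelowitz et al.) theorem that relies on the 3-SUM conjecture and replacing it with an Exact-Triangle-based component of matching parameters. The key observation is that the Multiphase trade-off $t_1 n + t_2 n + t_3 n^{(1+\gamma)/(3-2\gamma)} \geq n^{2/(3-2\gamma)-o(1)}$ does not depend directly on 3-SUM: it is inherited from a hardness bound on a \emph{Set Disjointness} data-structure problem, and the passage from Set Disjointness to Multiphase is entirely conjecture-agnostic.

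First I would make explicit the two-layer structure hidden inside the Kopelowitz et al. theorem. The Multiphase problem is, up to relabelling, an online form of Set Disjointness: Step~1 preprocesses the family $\mathcal{F}$ (the stored sets $F_i$), Step~2 fixes the query set $J$, and Step~3 answers whether $J$ is disjoint from a chosen $F_i$. Thus any Multiphase data structure with parameters $(t_1,t_2,t_3)$ yields a Set Disjointness data structure with correspondingly related preprocessing, update, and query budgets. I would record the exact parameter dictionary --- number of sets $k$, universe size $m$, typical set sizes, total input size $n=mk$, and the way $\gamma$ governs the relationship between these quantities --- so that a lower bound of the form ``Set Disjointness in this regime requires such-and-such preprocessing/query trade-off'' transfers verbatim into the stated Multiphase inequality. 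This half is pure bookkeeping and carries over unchanged from the 3-SUM proof.

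The genuinely new ingredient is the hardness of Set Disjointness \emph{from Exact Triangle}. Here I would invoke the Vassilevska Williams--Xu reduction, arguing in the contrapositive: a sufficiently fast Set Disjointness data structure (in exactly the parameter regime above) would yield a truly subcubic algorithm for Exact Triangle, contradicting the Exact Triangle conjecture. Plugging this Set Disjointness lower bound into the dictionary from the previous paragraph reproduces the identical trade-off inequality, now conditioned on the Exact Triangle conjecture in place of 3-SUM. Finally, since the excerpt already records that falsity of the Exact Triangle conjecture would falsify both the 3-SUM and the APSP conjectures --- equivalently, each of those conjectures implies the Exact Triangle conjecture --- the ``in particular under either the 3-SUM conjecture or the APSP conjecture'' clause is immediate.

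The step I expect to be the main obstacle is precisely the parameter matching between the Williams--Xu reduction and the Set-Disjointness-to-Multiphase reduction. The exponents $(1+\gamma)/(3-2\gamma)$ and $2/(3-2\gamma)$ are delicate, and one must check that the Set Disjointness instances produced from an Exact Triangle instance exhibit the same asymptotic relationship among number of sets, universe size, set sizes, and total input size $n=mk$ as those produced from a 3-SUM instance in the Kopelowitz et al. argument. If these regimes coincide for every $0<\gamma<1$, the whole trade-off is preserved and no new computation is needed; any mismatch would instead shift the exponents, so the crux of the verification is confirming that the two reductions land in the same regime and that the Exact-Triangle hardness is strong enough to drive the same bound.
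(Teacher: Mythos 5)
Your proposal matches the paper's justification exactly: the paper also obtains this theorem by observing that the Set-Disjointness-to-Multiphase half of the Kopelowitz et al.\ argument is conjecture-agnostic, and that the Vassilevska Williams--Xu fine-grained reduction from Set Disjointness to Exact Triangle lets one swap out the 3-SUM source of hardness while preserving the same trade-off, with the ``in particular'' clause following because both 3-SUM and APSP reduce to Exact Triangle. Your added caution about parameter matching is sensible due diligence but does not change the route.
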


Better bounds can be obtained if we assume the OMv conjecture instead. Indeed, the work of Henzinger et al.\ implies the following.
\begin{theorem}[\cite{Henzinger2015}]\label{thm:multi-omv}
Under the OMv conjecture, for any $0<\gamma<1$, if $t_1$ is at most polynomial in $n$ then the Multiphase Problem requires
\[t_2\cdot n^{1-\gamma} + t_3\cdot n^{\gamma} 
 \geq n^{1-o(1)}.\]
\end{theorem}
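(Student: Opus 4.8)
The plan is to exhibit a fine-grained reduction that turns a hypothetical fast Multiphase data structure into a fast algorithm for OMv, so that the OMv conjecture forces the claimed trade-off. The starting observation is that the Multiphase problem is literally a boolean matrix--vector problem: identify the family $\mathcal{F}=\{F_1,\dots,F_k\}$ with the $k\times m$ boolean incidence matrix $M$ whose $i$-th row is the indicator of $F_i$, and identify a query set $J\subseteq\{1,\dots,m\}$ with the boolean column vector $v=\chi_J$. Then $J\cap F_i\neq\emptyset$ holds precisely when $(Mv)_i=1$, so Step~2 loads a vector $v$ and each Step~3 query reads off one coordinate of the boolean product $Mv$. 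To target the exponent $\gamma$, I would set $m=n^{1-\gamma}$ and $k=n^{\gamma}$, so that $M$ is an $n^{\gamma}\times n^{1-\gamma}$ matrix with $mk=n$ entries, one Step~2 costs $t_2 n^{1-\gamma}$, and recovering the whole product $Mv$ via $k$ queries costs $t_3 n^{\gamma}$ --- exactly the two terms in the statement.

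The difficulty is that OMv reuses a single preprocessed matrix against many vectors, whereas a Multiphase structure is one-shot: it allows only a single Step~2. The key step is to make the preprocessing reusable by rollback. In the word-RAM model an Update running in time $O(t_2 m)$ can touch at most $O(t_2 m)$ memory cells, so by logging every cell it writes we can undo it in the same time; the same holds for any cells a Query mutates. Hence, after performing Step~1 once at cost $O(t_1 n)$, we can feed the OMv vectors $v_1,\dots,v_N$ to the structure one at a time --- for each $v_t$, run Update, answer the $k$ coordinate queries to recover $Mv_t$, and then replay the log to restore the post-Step-1 state --- paying only $O(t_2 m + t_3 k)$ per vector. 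This converts the one-shot structure into an online matrix--vector engine whose total running time on $N$ rounds is $O\!\left(t_1 n + N(t_2 n^{1-\gamma}+t_3 n^{\gamma})\right)$.

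Finally I would invoke the OMv-based hardness of this online product. Using the equivalence of OMv and OuMv and a parametrised (rectangular, fixed-matrix) version of the conjecture, together with a batching argument that runs $N\ge\max(k,m)$ rounds against the same $n^{\gamma}\times n^{1-\gamma}$ matrix, one gets that the online product cannot be solved in time $o(N\cdot mk)=o(N\cdot n)$. Choosing $N$ to be a large enough polynomial makes the one-time term $t_1 n$ (polynomial by hypothesis) negligible next to $Nn$, so the per-round bound $t_2 n^{1-\gamma}+t_3 n^{\gamma}\ge n^{1-o(1)}$ must hold, which is the claim. I expect the main obstacles to be two. First, justifying the rollback rigorously --- in particular arguing that Queries need not permanently alter the structure and that restoring the state is genuinely free up to constant factors. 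Second, and more substantially, establishing the parametrised rectangular OMv hardness $\widetilde{\Omega}(Nmk)$ uniformly over $\gamma\in(0,1)$: one must reduce balanced square OMv to the fixed skewed matrix with many rounds, which is where the real fine-grained content (and the role of the OMv/OuMv equivalence) lies.
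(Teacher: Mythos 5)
The paper states this theorem as a direct citation of Henzinger et al.~\cite{Henzinger2015} and contains no proof of its own, and your reconstruction --- viewing the incidence matrix of $\mathcal{F}$ as a rectangular $n^{\gamma}\times n^{1-\gamma}$ boolean matrix, recovering $Mv_t$ coordinate-by-coordinate with $k$ Step-3 queries, reusing the Step-1 state across many vectors via word-RAM rollback, and amortizing the polynomial preprocessing over enough rounds of the rectangular OMv/OuMv hardness --- is essentially the argument by which that reference derives Multiphase hardness. The two obstacles you flag are real but are exactly what \cite{Henzinger2015} supplies (the rectangular fixed-matrix hardness is their reduction from square OMv, and the rollback/reuse of a one-shot structure is standard), so the proposal is correct and follows the same route as the cited source.
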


\section{Reduction from the Multiphase Problem}\label{s:reduction}

\begin{figure}
    \centering
    \includegraphics[width=\textwidth]{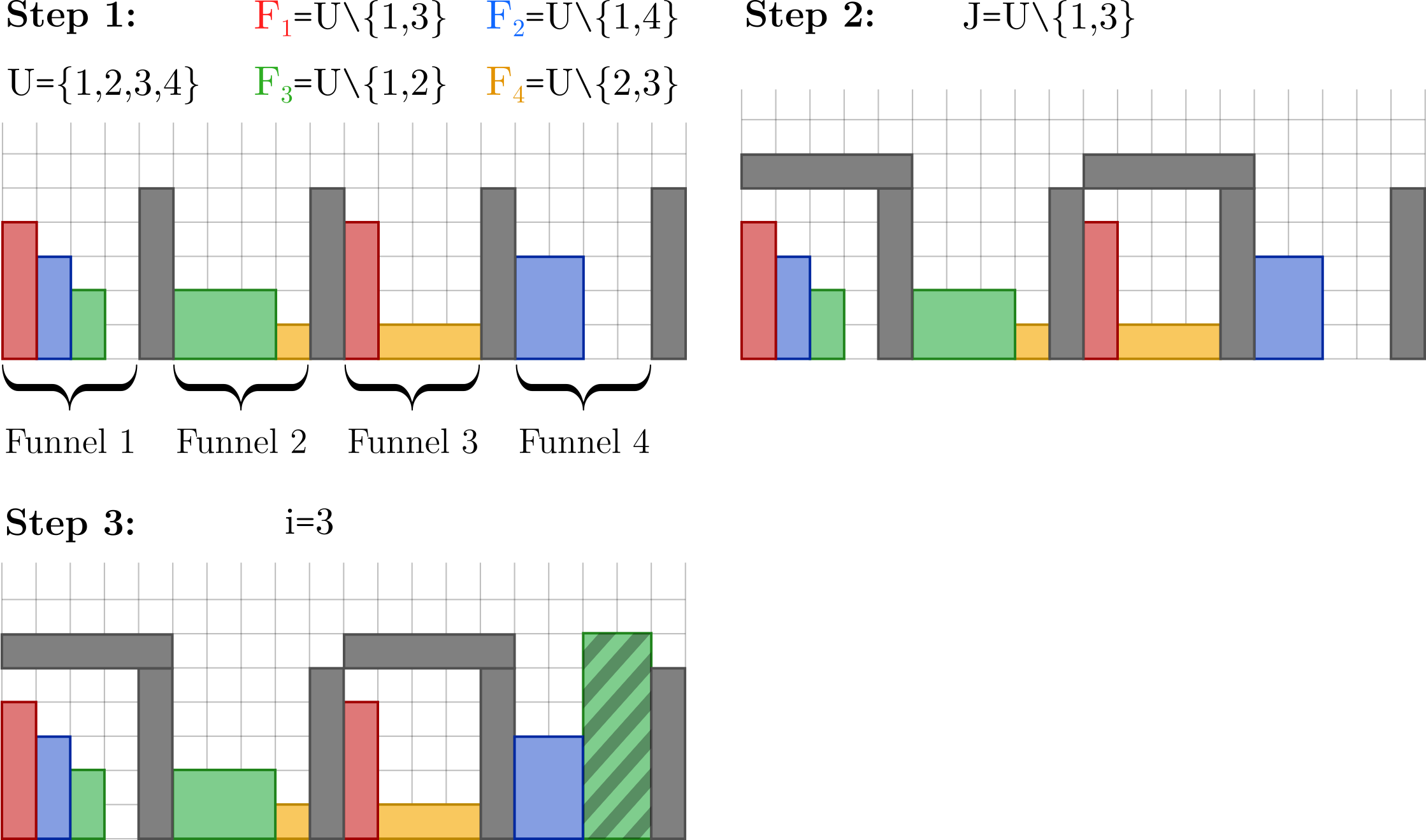}
    \caption{Illustration of the reduction from the Multiphase problem to RDDS. Notice that in Step 3, the striped green rectangle can only be dropped to lie at a height lower than $2$ in a funnel which is not blocked and where there is no other green rectangle. Such a funnel corresponds to an element of $U$ which is both in $J$ and in $F_3$.}\label{fig:phase1}
\end{figure}
\begin{theorem}
If there exists an RDDS which maintains a set of $O(n)$ for boards of width $w=\Theta(n)$ with expected preprocessing time $t_p$, expected update time $t_u$ and expected query time $t_q$, then we can solve the Multiphase Problem with $t_1=t_p/n + t_u$, $t_2 = t_u$ and $t_3=t_q$.
\end{theorem}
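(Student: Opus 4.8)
The plan is to encode a Multiphase instance $(\mathcal{F}, J, i)$ as three batches of RDDS operations, one per step, so that a single final query decides whether $J \cap F_i = \emptyset$. The guiding observation is that, since a piece can only be shifted horizontally and then dropped straight down, a piece of width $W$ dropped at $x_d$ lands with its bottom at $\max_{x \in [x_d, x_d+W)} S(x)$, where $S$ is the current ``skyline''; hence the outcome of a query is governed by the piece's \emph{width} together with the profile $S$. I would build $m$ \emph{funnels} side by side, one per universe element $j \in \{1, \dots, m\}$, where funnel $j$ is a downward-narrowing stack of $k$ ledges with left-aligned gaps of strictly increasing widths $A_1 < A_2 < \dots < A_k$ (wide mouth at the top, narrow throat at the bottom). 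A piece of width $W_i$ chosen with $A_{i-1} < W_i \le A_i$ then passes through every gap of level $\ge i$ and is stopped exactly at level $i$; this is the mechanism by which the final query will ``select'' the set $F_i$. Each funnel uses $O(k)$ rectangles and width $O(k)$, for a total of $O(mk) = O(n)$ rectangles and total width $\Theta(mk) = \Theta(n)$, as required.

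With this skeleton I would realise the three steps as follows. In Step 1, knowing all of $\mathcal{F}$, I initialise the RDDS and build every funnel; in addition, for each pair $(j,i)$ with $j \notin F_i$ I drop a short ``plug'' into the landing zone of level $i$ of funnel $j$, so that any piece stopped there rests at height at least $\tau$ (a fixed small threshold, say $\tau = 2$), while leaving that zone clear when $j \in F_i$. Crucially, the plug for a level $i'$ sits to the right of the channel used by any piece narrow enough to reach a deeper level, so plugs never obstruct the descent of a piece keyed to a different level. This uses $O(n)$ updates. In Step 2, knowing $J$, for each $j \notin J$ I drop a single wide ``cap'' over the whole mouth of funnel $j$, so that every piece now rests on the cap at height $\ge \tau$; this uses $O(m)$ updates, one per element. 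After Steps 1--2 the invariant is: a piece keyed to level $i$ dropped into funnel $j$ ends below $\tau$ if and only if $j \in F_i$ (no level-$i$ plug) and $j \in J$ (no cap), that is, exactly when $j \in J \cap F_i$. In Step 3, given $i$, I issue one query with a piece of width $W_i$; the RDDS returns the lowest placement over the whole board, which is below $\tau$ if and only if some funnel admits a low level-$i$ drop, i.e.\ if and only if $J \cap F_i \ne \emptyset$.

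To read off the landing height from the query I would use the fact that each update returns the current maximum height, so after Step 2 we know the global maximum $M$; choosing the query piece's height to exceed $M$ forces the returned ``highest point'' to equal the optimal landing height plus that height, from which the landing height, and hence its comparison with $\tau$, is recovered. The time bookkeeping is then immediate: Step 1 costs the preprocessing $t_p$ plus $O(n)$ updates, i.e.\ $O(t_p + n \cdot t_u) = O((t_p/n + t_u)\cdot n)$, giving $t_1 = t_p/n + t_u$; Step 2 costs $O(m)$ updates, i.e.\ $O(t_u \cdot m)$, giving $t_2 = t_u$; and Step 3 is a single query, giving $t_3 = t_q$. All coordinates are polynomially bounded, hence representable in $O(\log n)$ bits, and at most $O(n)$ rectangles are ever present.

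The main obstacle is the correctness of the funnel gadget, and in particular ruling out interference between levels for a single straight-dropped axis-aligned piece. I would need to verify carefully that (i) a width-$W_i$ piece passes every gap of level $\ge i$ and is blocked precisely at level $i$ (the nesting $A_{i-1} < W_i \le A_i$), (ii) the level-$i'$ plugs and the caps placed at other levels lie outside the descending piece's column and therefore create neither a spurious low nor a spurious high rest (the right-alignment argument), so that the gadget has no false positives or false negatives, and (iii) the whole layout fits in width $\Theta(n)$ with integer coordinates, which forces the gap widths $A_\ell$ to grow only linearly in $\ell$ and the funnels to be packed tightly. Everything else --- the step-by-step time accounting and the reduction's conclusion --- follows routinely once the gadget is shown to correctly compute, for the queried index $i$, the predicate $J \cap F_i \ne \emptyset$.
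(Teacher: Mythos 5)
Your overall skeleton is the same as the paper's: one funnel per universe element $j$, a narrowing profile so that the query piece's width selects the level $i$, wide caps in Step~2 to disable funnels with $j\notin J$, and a single query whose returned height decides $J\cap F_i\neq\emptyset$, with the same time accounting. However, the decision criterion you chose --- a \emph{fixed} threshold $\tau=2$ --- cannot be made to work, and this is exactly the part you deferred to ``careful verification.'' The obstruction is structural: within any fixed funnel, the optimal landing height of a dropped piece is monotone non-decreasing in its width, because every footprint of a wider piece contains a footprint of a narrower one and the landing height is the maximum of the skyline over the footprint. Consequently, for any fixed configuration of ledges and plugs and any fixed height $\tau$, the set of levels $i$ whose keyed piece can land below $\tau$ in funnel $j$ is a \emph{prefix} of $\{1,\dots,k\}$ in the width ordering. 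But $\{i : j\in F_i\}$ is an arbitrary subset, so no placement of plugs can realize it against a fixed threshold. The same nesting of footprints also falsifies your claim that the plug for level $i'$ ``never obstructs'' pieces keyed to other levels: every left-aligned piece wider than $W_{i'}$ passes over that plug, so avoiding spurious high rests for those pieces forces the ledge tops to increase strictly with the level, which in turn forces the unplugged landing height at level $i$ to grow with $i$ --- again incompatible with a fixed $\tau$.

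The repair is to let the acceptance threshold depend on $i$, which is available since $i$ is known at query time. This is what the paper does: the ledge at level $i$ of funnel $j$ is present exactly when $j\notin F_i$ (no separate plugs), it narrows the gap at height $k+1-i$ so that the width-$(k+1-i)$ query piece can descend to bottom height at most $k-i$ if and only if some unblocked funnel has $j\in F_i$, and one then tests whether the returned highest point is at most $2k+2-i$. If you replace your fixed $\tau$ by such an $i$-dependent cutoff (and set the ledge heights accordingly), your plug-based variant can also be made to work, but as written the gadget has both false positives and false negatives and the reduction does not go through.
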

\begin{proof}
Suppose we have such an RDDS and consider an instance $\mathcal{F}$ of the Multiphase problem. 

We perform Step 1 as follows (see Figure \ref{fig:phase1}). Given a family $\mathcal{F} = \{F_1,F_2,\ldots,F_k\}$ of subsets of $\{1,2,\ldots,m\}$, we set the width of the board to $m\cdot(k+1) = \Theta(n)$. For every $1\leq j\leq m$, drop a rectangle of height $k+1$ and width $1$ with left border at $x=j(k+1)-1$. This defines $m$ regions of width $k$ on the board (between these rectangles and the borders), which we associate with the $m$ elements of the base set. Now for $j$ going from $1$ to $m$ we create a sort of ``funnel'' structure in region $j$ as follows: for $i$ going from $1$ to $k$, if $j \not\in F_i$, drop a rectangle of height $k+1-i$ such that its right border is at $x=i + (j-1)(k+1)$, its bottom border is at $y=0$ and its left border lies as far left as possible. All in all, we have inserted $n = O(k\cdot m)$ rectangles in $O(t_p + n\cdot t_u)$ expected time.

When given $J \subset \{1,2,\ldots, m\}$ in Step 2, drop a rectangle of width $k+1$ and height $1$ with left border at $x=(j-1)(k+1)$ for every $1\leq j \leq m$ such that $j\not\in J$. This has the effect of ``blocking'' the regions which are not relevant in the next step, and costs $O(m\cdot t_u)$ expected time.

Finally, when given an index $i$ in Step 3, query the RDDS with a rectangle of width $k+1-i$ and height $k+2$. It is easy to check that if there is some $j\in J\cap F_i$, then the bottom of such a rectangle can drop in the funnel corresponding to $j$ in such a way that its bottom lies at or below $y=k-i$ (so its top lies at or below $y=2k+2-i$). On the other hand, if $J\cap F_i=\emptyset$, then the funnel of every region which was not blocked in the previous is too narrow at height $y=k+1-i$ for the rectangle to fall through. Thus, we can solve the Multiphase Problem with a single query in Step 3, costing $O(t_q)$ expected time.
\end{proof}

This then implies the following:
\begin{theorem}
Under the Exact Triangle conjecture (and in particular under either the 3-SUM conjecture or the APSP conjecture), for any $0<\gamma<1$, an RDDS for boards of width $w=\Theta(n)$ requires 
\[t_p + t_u\cdot n + t_q\cdot n^\frac{1+\gamma}{3-2\gamma} \geq n^{\frac{2}{3-2\gamma}-o(1)}.\]
In particular, for $\gamma = 2/3$, we have
\[t_p/n + t_u + t_q \geq n^{1/5-o(1)}.\]

Moreover, if $t_p$ is at most polynomial in $n$, then under the OMv conjecture, for any $0<\gamma<1$ we have
\[t_u\cdot n^{1-\gamma} + t_q\cdot n^{\gamma} \geq n^{1-o(1)}.\]
In particular, for $\gamma = 1/2$, we have
\[t_u + t_q \geq n^{1/2-o(1)}.\]
\end{theorem}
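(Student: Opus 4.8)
The plan is to obtain this statement as an immediate corollary of the preceding reduction theorem by composing it with the known conditional lower bounds for the Multiphase Problem stated in Section~\ref{section:multiphase}. The reduction tells us that any RDDS with parameters $t_p, t_u, t_q$ yields a Multiphase data structure with $t_1 = t_p/n + t_u$, $t_2 = t_u$ and $t_3 = t_q$. So the entire proof amounts to substituting these identities into the relevant Multiphase lower-bound theorems and simplifying, taking care that multiplicative constants get absorbed into the $n^{-o(1)}$ slack.

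For the first bound, I would start from the Exact Triangle (Williams--Xu) form
\[t_1 \cdot n + t_2\cdot n + t_3\cdot n^{\frac{1+\gamma}{3-2\gamma}} \geq n^{\frac{2}{3-2\gamma}-o(1)}.\]
Substituting the identities above gives $(t_p/n + t_u)\cdot n + t_u \cdot n + t_q \cdot n^{\frac{1+\gamma}{3-2\gamma}} = t_p + 2 t_u n + t_q \cdot n^{\frac{1+\gamma}{3-2\gamma}}$; since the leading factor $2$ on the term $t_u n$ is harmless, this rearranges to exactly the claimed inequality. The special case $\gamma = 2/3$ is then a direct calculation: one has $\frac{1+\gamma}{3-2\gamma} = 1$ and $\frac{2}{3-2\gamma} = 6/5$, so the bound reads $t_p + t_u n + t_q n \geq n^{6/5-o(1)}$, and dividing through by $n$ yields $t_p/n + t_u + t_q \geq n^{1/5-o(1)}$.

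For the OMv bound I would invoke Theorem~\ref{thm:multi-omv}, which requires $t_1$ to be polynomial in $n$. Here the only mild subtlety arises: the hypothesis of our statement only assumes $t_p$ polynomial, whereas $t_1 = t_p/n + t_u$ is polynomial only if $t_u$ is as well. This is resolved by the standard observation that we may assume $t_u$ polynomial without loss of generality, since if $t_u$ were super-polynomial then $t_u \cdot n^{1-\gamma}$ would already exceed $n^{1-o(1)}$ and the conclusion would hold trivially. With $t_1$ polynomial, the theorem gives $t_2 \cdot n^{1-\gamma} + t_3 \cdot n^{\gamma} \geq n^{1-o(1)}$, which upon substitution is precisely $t_u \cdot n^{1-\gamma} + t_q \cdot n^{\gamma} \geq n^{1-o(1)}$. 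Setting $\gamma = 1/2$ and dividing by $n^{1/2}$ yields $t_u + t_q \geq n^{1/2-o(1)}$.

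The proof therefore presents no real obstacle beyond bookkeeping; the one point deserving care is the reduction to the case of polynomial $t_u$ in the OMv argument, together with the recognition that all of the $O(1)$ constants produced by the substitution (notably the factor $2$ merging the two copies of $t_u n$) are swallowed by the $n^{-o(1)}$ term.
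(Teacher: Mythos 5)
Your proposal is correct and matches the paper's (implicit) argument exactly: the paper presents this theorem as an immediate consequence of the reduction combined with the cited Multiphase lower bounds, which is precisely the substitution and bookkeeping you carry out. Your handling of the constant factor $2$ and of the polynomiality of $t_1$ via the trivial case when $t_u$ is super-polynomial is sound and fills in the only details the paper leaves unstated.
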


Note that, as mentioned in the introduction of the paper, with some care this reduction could be adapted to the case where a query is always followed by an update dropping the rectangle at the returned position, and no other updates are allowed (i.e. we are forced to always follow the suggested greedy strategy). This can be done by creating ``notches'' of different widths and heights at the top of the rectangles delimiting the different funnels, so that in Step 2 the funnel over which a rectangle is dropped can be chosen by specifying the width of the rectangle. In particular we could get the same bound from the OMv conjecture (exploiting the fact that Theorem \ref{thm:multi-omv} already holds for the Multiphase problem where $k=m$). It could also be adapted to the case where we are allowed to rotate the rectangles before dropping them, by stretching all the rectangles enough horizontally so that rotating them is never advantageous (here we would need to consider boards of larger than linear width). As rectangles are a special case of polyominoes, all these reductions hold also for arbitrary polyominoes of size $O(n)$.

\section{A near optimal data structure}\label{s:ds}

The previous section shows that if we want to follow the greedy strategy, then we cannot always make a move very quickly. While this at least gives us a pretty good excuse for our slow plays, let us now focus on the positive side and prove that we can match this lower bound up to an $n^{o(1)}$ factor.

\begin{figure}
    \centering
    \includegraphics[width=\textwidth]{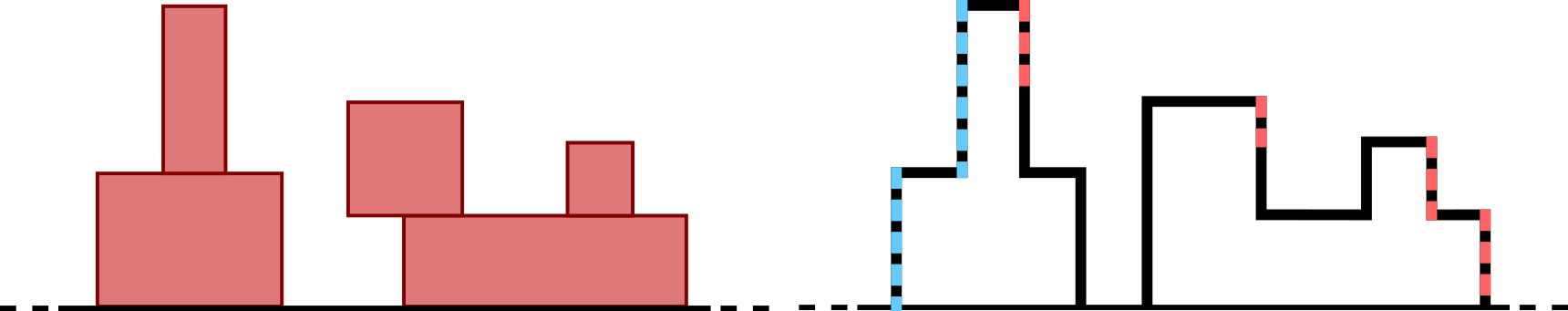}
    \caption{Illustration of the skyline, left staircase (blue dotted line) and right staircase (red dotted line) of a set of rectangles.}\label{fig:skyline}
\end{figure}
\begin{definition}
Given a finite set $S$ of axis-aligned rectangles in the plane lying on or above $x$-axis, the skyline of $S$ is the union of the top, left and right boundaries of the region $B$ obtained as follows:
\begin{itemize}
    \item extend the bottom of every rectangle in $S$ so that its bottom lies on the $x$-axis;
    \item let $B$ be the union of all these newly-obtained rectangles.
\end{itemize}
The left staircase of $S$ consists of the parts of the skyline of $S$ visible from the point at $(-\infty, 0)$. Similarly, the right staircase of $S$ consists of the parts of the skyline of $S$ visible from the point at $(+\infty, 0)$. 
\end{definition}
Notice that starting with a set of $n$ independent axis-aligned rectangles, its skyline is a set of curves consisting of at most $4n-1$ axis-aligned segments, while the left and right staircases each consist of at most $n$ vertical segments. Moreover, the skyline contains all the information needed to know how far we can drop a rectangle of a certain width. 

To get an RDDS with $n^{1/2}$ update and query time, up to a polylogarithmic factor, we will maintain the skyline, left and right staircases of our set of rectangles, broken into at most $O(n^{1/2})$ chunks each consisting of a sequence of roughly $n^{1/2}$ segments appearing consecutively on the skyline.

To start, it is an easy exercise to show the following via a simple plane sweep.

\begin{lemma}\label{l:makesky}
Given a set of $n$ independent axis-aligned rectangles, one can construct its skyline, left and right staircase in $O(n\log n)$ time (where we store the segments appearing in each in order of appearance left to right in three distinct arrays).
\end{lemma}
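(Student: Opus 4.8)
The plan is to compute the skyline by a left-to-right plane sweep, and then to extract the two staircases from the skyline by a pair of linear scans.

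First I would reduce to the classical building-silhouette problem. After extending the bottom of every rectangle down to the $x$-axis, the height of the skyline at a given $x$-coordinate is exactly the maximum height among the rectangles whose horizontal extent contains $x$ (and $0$ where no rectangle is present). So it suffices to maintain this running maximum as a vertical sweep line moves from left to right.

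Concretely, I would create $2n$ event points, one for the left edge and one for the right edge of each rectangle, and sort them by $x$-coordinate in $O(n\log n)$ time. I maintain a balanced search tree (or any multiset) storing the heights of the rectangles currently crossed by the sweep line, supporting insertion, arbitrary deletion, and maximum-query in $O(\log n)$ time each. Processing the left edge of a rectangle of height $h$ inserts $h$; processing its right edge deletes $h$. Between two consecutive distinct event $x$-coordinates the current maximum $H$ is constant, so I emit a horizontal skyline segment at height $H$ spanning that $x$-interval, together with the vertical segment recording the change of $H$ at the event. Since there are at most $2n$ event abscissae and the height is constant between consecutive ones, this produces $O(n)$ segments (at most $4n-1$, as claimed), already in left-to-right order, in total time $O(n\log n)$ dominated by the sort and the $2n$ tree operations.

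Once the skyline array is available, the left staircase is obtained by a single left-to-right scan that tracks the maximum skyline height seen so far: a point of the skyline is visible from $(-\infty,0)$ precisely when no part of the skyline further to the left is at least as high, i.e.\ exactly when its height is a new prefix maximum. Hence the vertical segments of the left staircase are exactly the rising edges at which the skyline attains a strictly larger maximum, and they are generated in order during the scan; the right staircase is produced symmetrically by scanning from right to left. Each scan is $O(n)$, so the overall running time is $O(n\log n)$.

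The only genuinely delicate points are bookkeeping rather than conceptual: breaking ties when several vertical edges share an $x$-coordinate (process all insertions before deletions at a common abscissa, or coalesce such events so that zero-width segments are not emitted), and checking that the prefix-maximum characterization of visibility coincides with the staircase definition at the places where the skyline height first leaves $0$. I expect these degenerate cases to be the main thing to get right, but each is routine.
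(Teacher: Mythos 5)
Your proposal is correct and matches the paper's intent exactly: the paper leaves this lemma as ``an easy exercise via a simple plane sweep,'' and your sweep-line computation of the silhouette followed by prefix-maximum scans for the two staircases is precisely that standard argument. The one detail worth tightening is that a visible vertical piece of the left staircase is only the portion of a rising edge lying \emph{above} the running prefix maximum (not necessarily the whole edge), but your scan as described handles this.
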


Now, instead of looking for how low we can drop a rectangle of a given width, let us instead turn this question on its head and ask how wide of a rectangle we can drop if we want it to lie at or below a certain height $h$. If this is impossible, let's say this maximum width is $0$. Remember that we are still constrained by two vertical lines between which our rectangle needs to lie (otherwise, we could have our rectangle be as wide as we wish and simply drop it next to the existing rectangles).
Two special cases arise: when the widest rectangle which can be dropped to height $h$ or below touches the left vertical line or when this rectangle touches the right vertical line.
The maximum width of a rectangle corresponding to these cases can be determined in $O(\log n)$ time given the left and right staircase of the existing rectangles, simply by binary searching through both and finding where they meet the horizontal line $y=h$.
What about the general case? Here we have the following.
\begin{lemma}\label{lemma:log-ds}
Given the skyline of a set of $n$ rectangles, we can construct a data structure in $O(n\log n)$ with which given some height $h$, we can return the width and position of the widest rectangle which can be dropped at or below height $h$ in $O(\log n)$ time.
\end{lemma}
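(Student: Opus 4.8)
The plan is to first reinterpret the query geometrically. Viewing the skyline as a piecewise-constant height function $f$ on $[0,w]$, a rectangle of width $W$ dropped so that it occupies the horizontal interval $[x,x+W]$ comes to rest with its bottom at height $\max_{x'\in[x,x+W]} f(x')$. Hence a rectangle can be dropped at or below height $h$ exactly when $f \le h$ on its entire span, and the widest such rectangle corresponds to a longest maximal interval of $[0,w]$ on which $f \le h$. So the task reduces to the following: preprocess the $O(n)$ horizontal segments $s_1,\dots,s_p$ of the skyline (each carrying a height $y_i$ and a width $\Delta_i$, listed left to right) so that, given $h$, we can report the total width and the left endpoint of a longest maximal run of consecutive segments all of height at most $h$.

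I would solve this by an offline sweep over $h$ in the style of Kruskal's algorithm. Sort the segments by height and process them in non-decreasing order of $y_i$, maintaining a union--find structure in which each component is a maximal contiguous block of already-processed (``active'') segments; each component stores the total width of its block and the $x$-coordinate of its left endpoint. When $s_i$ is activated I union it with its left and right neighbours whenever those are already active, updating the block width and left endpoint accordingly, and I keep a running global maximum of the block width together with the left endpoint of a block attaining it. After all segments of a given distinct height value $h^\ast$ have been processed, I record the triple consisting of $h^\ast$, the current global maximum width $W^\ast$, and an achieving position $x^\ast$. Because activating a segment can only enlarge or merge runs, the maximum run width is non-decreasing in $h^\ast$, so these records form a list sorted by height whose widths are monotone. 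Sorting costs $O(n\log n)$ and the sweep costs $O(n\,\alpha(n))$, so the structure is built in $O(n\log n)$ time, as required.

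A query for height $h$ is then a single binary search: find the largest recorded threshold $h^\ast \le h$ and return its stored width and position (returning width $0$ when $h$ lies below every segment height). Correctness hinges on the observation that no segment has height strictly between $h^\ast$ and $h$, so the set of active segments---and hence the collection of maximal runs and their widths---is identical at threshold $h^\ast$ and at the queried value $h$; thus the run recorded for $h^\ast$ really is a longest run for $h$, and the reported rectangle can indeed be dropped at or below $h$. The step I expect to require the most care is not the width but the position bookkeeping: I must argue that the component whose left endpoint I store at threshold $h^\ast$ still exists, unshrunk, at query time. This follows because components only grow under the sweep and the global maximum (with its recorded position) is refreshed exactly when a merge produces a wider block, so every stored record corresponds to a block that genuinely exists at that threshold. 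Finally, note that a longest run may itself touch a wall, so this single structure already subsumes the two boundary cases that can alternatively be handled by binary searching the left and right staircases, and altogether it answers the widest-drop query in $O(\log n)$ time.
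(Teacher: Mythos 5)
Your proof is correct, but it takes a genuinely different route from the paper's. You reformulate the query as ``find the longest maximal run of consecutive horizontal skyline segments whose heights are all at most $h$'' and precompute the answer for every threshold by a Kruskal-style sweep: activate segments in increasing order of height, maintain maximal contiguous blocks with a union--find structure, and record the running maximum (width and left endpoint) after each distinct height value; a query is then a predecessor search. The paper instead makes two passes over the skyline (left-to-right and right-to-left), using a balanced search tree keyed by $y$-coordinate and augmented with rightmost-vertex information to compute, for each skyline vertex $u$, the widest rectangle whose lower corner is anchored at $u$, and then takes prefix maxima over heights. Both constructions run in $O(n\log n)$ with $O(\log n)$ queries. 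Your version is arguably the cleaner of the two: the union--find sweep over an array of segments is more elementary than an augmented BST answering nearest-strictly-above queries, and---more substantively---your characterization of the optimum as a maximal run handles transparently the configuration in which the dropped rectangle rests on a segment strictly interior to its span (so that neither of its lower corners coincides with a skyline vertex), a case the paper's vertex-anchored candidates do not obviously cover; it also subsumes the two wall-touching cases that the paper treats separately via the staircases. Your correctness argument for the stored positions (records are refreshed exactly when a block grows, so each record names a block that genuinely exists at its threshold) is sound. The only detail you should make explicit is that the ground-level gaps at height $0$ between disconnected pieces of the skyline (and the two bounding walls) must be included as segments in the sweep, so that runs resting on the bare floor are counted; this is routine bookkeeping and does not affect the bounds.
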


\begin{figure}
    \centering
    \includegraphics[scale=0.6]{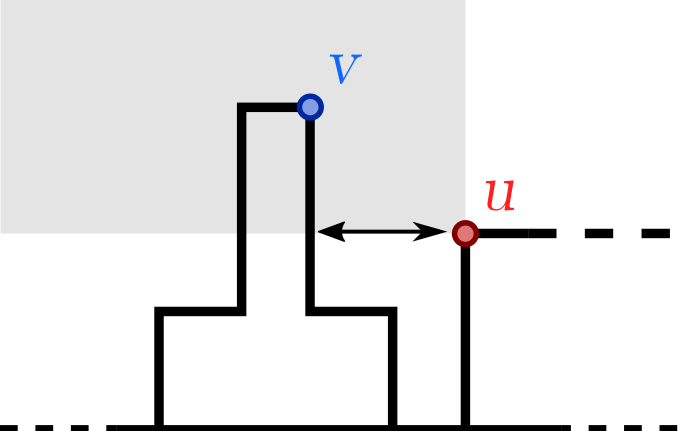}
    \caption{The width of the largest rectangle whose right border touches $u$ is the horizontal distance from $u$ to the horizontally nearest vertex in the shaded region (which in this case is $v$).}\label{fig:dist-neighbor}
\end{figure}

\begin{proof}
We will precompute the answer for all heights $h$ which correspond to some vertex in the skyline. 

We start by computing for every height $h$, the widest rectangle which can be dropped so that its right-side touches a vertex at height $h$ on the skyline. We do this by traversing the skyline from left to right and maintaining the answer for the visited vertices in some data-structure allowing for $O(\log n)$ lookup and updates (some flavor of self-balancing binary search tree for example).
When we reach some vertex $u$ at height $h$, we want to quickly determine how wide we can make a rectangle whose lower-right corner touches $u$ without it intersecting the interior of the skyline. Notice that this is the difference in $x$-coordinates between $u$ and the horizontally nearest neighbor to the left of $u$ which is strictly above $u$ (see Figure \ref{fig:dist-neighbor}). If we store the previously visited vertices in a self-balancing binary search tree sorted by $y$-coordinate and store in every node of that tree the rightmost vertex in the corresponding subtree, we can get this information in $O(\log n)$ time. If this width is larger than the current largest width stored for height $h$, update it. Overall this procedure can be carried out in $O(n\log n)$ time.

Similarly by traversing the skyline from right to left we compute for every height $h$, the widest rectangle which can be dropped so that its left-side touches a vertex at height $h$ on the skyline. 

Then, for every height $h$ corresponding to some vertex, the widest rectangle that can drop at or below height $h$ is the maximum width of a rectangle touching a vertex at height $h'$ for all $h'\leq h$. We can compute this for all relevant $h$, as well as the corresponding location where a rectangle should be dropped, in linear time given the information we have already computed. We store this information in a binary search tree ordered by $h$. Now given any height (not necessarily corresponding to a vertex of the skyline), we can determine in $O(\log n)$ how wide a rectangle can be dropped at or below that height (and where to drop it) by looking in the binary search tree for the largest stored height no larger than $h$.
\end{proof}

We are now ready to break the skyline into chunks to make a data structure which supports insertions.
\begin{theorem}
There is a data structure which maintains a set of $O(n)$ independent axis-aligned rectangles in the plane, lying on or above the $x$ and between the vertical lines $x=0$ and $x=w$, with $O(n\log n)$ construction time and which allows the following operations:
\begin{itemize}
    \item (Update) Given an axis-aligned rectangle $R$ and a number $x_d$, drop $R$ at $x$-coordinate $x_d$ (here we assume that $R$ and $x_d$ are such that $R$ will lie between the lines $x=0$ and $x=w$).
    \item (Query) Given a height $h$ return the widest rectangle which can be dropped to lie at height $h$, together with the position at which it needs to be dropped.
\end{itemize}
Updates can be supported in $O(n^{1/2}\log^{3/2} n)$ time and queries in $O(n^{1/2}\log^{1/2} n)$ time.
\end{theorem}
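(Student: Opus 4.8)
The plan is to maintain the skyline of the current rectangle set (represented over all of $[0,w]$, so that empty space simply contributes height-$0$ segments) cut into $\Theta(\sqrt{n/\log n})$ consecutive \emph{chunks}, each consisting of $\Theta(s)$ of its horizontal segments in left-to-right order, where $s=\Theta(\sqrt{n\log n})$. For each chunk I would store (i) the data structure of Lemma~\ref{lemma:log-ds} built on that chunk alone, and (ii) the prefix-maxima and suffix-maxima of the heights of its horizontal segments. The chunks themselves are kept in a balanced search tree keyed by $x$-coordinate, so that the chunk(s) covering a given $x$-interval can be located in $O(\log n)$ time. Building the skyline with Lemma~\ref{l:makesky} costs $O(n\log n)$; since the per-chunk structure of Lemma~\ref{lemma:log-ds} costs $O(s\log s)$ and there are $\Theta(n/s)$ chunks, the total construction time is $O(n\log n)$.

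For a query I would first observe that the widest rectangle droppable at or below height $h$ has width equal to the length of the longest $x$-interval inside $[0,w]$ on which the skyline is at most $h$ (its bottom rests at the maximum skyline height over the chosen interval). I would extract from every chunk, in $O(\log n)$ time, four quantities relative to $h$: the longest below-$h$ run internal to the chunk (a query to its Lemma~\ref{lemma:log-ds} structure), the longest below-$h$ prefix-run and suffix-run (binary searches in the prefix- and suffix-maxima arrays), and the chunk's maximum height (to test whether the entire chunk lies below $h$). A single left-to-right fold then combines these summaries with the usual \{prefix, suffix, best, total, all-below\} merge to recover the global longest below-$h$ run together with its location, with wall-adjacent intervals handled automatically by the prefix-run of the first chunk and the suffix-run of the last. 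Since we touch each of the $\Theta(\sqrt{n/\log n})$ chunks for $O(\log n)$ time, the query runs in $O(\sqrt{n/\log n}\cdot\log n)=O(n^{1/2}\log^{1/2}n)$.

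For an update that drops $R$ with left border $x_d$ and width $\Delta$, I would first compute the resting height $y_0$ as the maximum of the skyline over $[x_d,x_d+\Delta]$ --- obtained by reading the stored maxima of the fully covered chunks and binary searching in the two boundary chunks --- so that the rectangle's top sits at $y_0+\mathrm{height}(R)$. Dropping $R$ rewrites the skyline on $[x_d,x_d+\Delta]$ to a single flat segment at that height. I would therefore delete every chunk fully covered by $[x_d,x_d+\Delta]$ from the search tree, split the two boundary chunks at $x_d$ and $x_d+\Delta$, and re-chunk the surviving prefix, the new flat segment, and the surviving suffix (together $O(s)$ segments) into $O(1)$ new chunks, rebuilding their Lemma~\ref{lemma:log-ds} structures and maxima arrays in $O(s\log n)=O(n^{1/2}\log^{3/2}n)$ time; merging any undersized chunk with a neighbour keeps every chunk of size $\Theta(s)$. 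Deleting the covered chunks costs at most $O(\sqrt{n/\log n}\cdot\log n)=O(n^{1/2}\log^{1/2}n)$, so the whole update stays within $O(n^{1/2}\log^{3/2}n)$.

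The main things to get right are, first, the correctness of the query fold: I need to argue that a rectangle straddling several chunks is captured by the all-below / prefix / suffix bookkeeping, and that a drop really does flatten the skyline to be constant on $[x_d,x_d+\Delta]$ (this uses $\mathrm{height}(R)>0$ and $y_0\ge$ skyline throughout the interval, so the new top strictly dominates the old skyline there). Second, and more delicate, is maintaining the chunk partition under updates that flatten a large region and thus destroy many chunks at once: I need the rebuild to re-create only $O(1)$ chunks, while the bulk deletion is either charged against chunk creations or simply bounded by the $O(\sqrt{n/\log n})$ chunk count, which already fits the budget. Balancing the chunk size at $s=\Theta(\sqrt{n\log n})$ is what simultaneously yields the $O(n^{1/2}\log^{1/2}n)$ query bound (from touching $\Theta(\sqrt{n/\log n})$ chunks) and the $O(n^{1/2}\log^{3/2}n)$ update bound (from rebuilding $O(1)$ chunks of size $s$).
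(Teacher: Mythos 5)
Your proposal is correct and follows essentially the same route as the paper: break the skyline into $\Theta(\sqrt{n/\log n})$ chunks of $\Theta(\sqrt{n\log n})$ segments, equip each with the Lemma~\ref{lemma:log-ds} structure plus prefix/suffix height maxima (the paper's left and right staircases), answer queries by combining per-chunk answers with a linear pass that stitches together runs spanning several chunks (your fold is the paper's two-pointer sliding window), and handle updates by flattening the covered region, splicing out the fully covered chunks, and rebuilding $O(1)$ boundary chunks with the standard split/merge rebalancing. The only differences are cosmetic (balanced search tree versus linked list over the chunks), and your stated time bounds match the paper's.
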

\begin{proof}
To build our data structure, we start by computing the skyline of the set of rectangles in $O(n\log n)$ time. We add to the skyline long vertical segments corresponding to the lines $x=0$ and $x=w$. Then we break the skyline into $O((n/\log n)^{1/2})$ contiguous pieces, each consisting of a skyline having at most $2(n\log n)^{1/2}$ vertices. We also make sure that the number of vertices in any two consecutive chunks of skyline sums up to at least $(n\log n)^{1/2}$. For each of those, we build the data structure of Lemma \ref{lemma:log-ds}, the left and right staircase and compute the maximum height of a vertex. We store these data structures in a pointer list $L$, in left to right order. This costs a total time of $O((n/\log n)^{1/2} \cdot (n\log n)^{1/2}\log n) = O(n \log n)$.

Now, given a query height $h$, we proceed as follows. We query each of the small data structures in $L$ to find the maximum width of a rectangle which can be dropped at or below height $h$ over all of them. This costs $O((n/\log n)^{1/2}\log n) = O((n\log n)^{1/2})$ time. We are left to deal with rectangles which could span across multiple of these skyline chunks. Here we traverse $L$ with two pointers $p_1$ and $p_2$ in a ``sliding window'' fashion:
\begin{itemize}
    \item We start with $p_1$ at the head of the list and $p_2$ its successor. While both have not reached the end of the list, we go through the following steps.
    \item If the maximum height of a vertex in the skyline chunk corresponding to $p_2$ is at or below $h$ and $p_2$ has not yet reached the tail of the list, we move $p_2$ to its successor in $L$.
    \item Otherwise, we find the widest rectangle which can be dropped at or below height $h$ and overlaps both skylines corresponding to $p_1$ and $p_2$ partially by binary searching through the right staircase corresponding to $p_1$ and the left staircase of $p_2$. Once this is done, we move $p_1$ over to $p_2$ and move $p_2$ to its successor (if $p_2$ has not yet reached the tail of the list).
\end{itemize}
By keeping track of the maximum width and the position it was encountered at throughout this procedure, they can be reported in $O((n/\log n)^{1/2}\log n) = O(n^{1/2}\log^{1/2}n)$ time.

Now, given a rectangle $R$ to drop at a given position, we first find all chunks of skyline in $L$ which will be completely covered by $R$ as well as the pointers $p_1$ and $p_2$ to the elements in $L$ corresponding to the skylines which overlap with the left and right border of $R$. By binary searching through the corresponding staircases and finding the maximum height of a vertex in the chunks which are covered by $R$ we can find out at which height it will end up. Notice that the skylines which are covered by $R$ are no longer relevant at this point, so we can ignore the corresponding data structures in $L$ by moving the successor pointer of $p_1$ so that it points to $p_2$. We now rebuild the data structure $p_1$ points to, including the rectangle $R$, as well as the one $p_2$ points to, excluding all parts of the skyline covered by $R$. If this causes any of the two skylines to have more than $2(n\log n )^{1/2}$ vertices, we can simply divide it in approximate halves and rebuild the two corresponding data structures from scratch in $O((n\log n)^{1/2}\cdot\log n)$ time each. If two consecutive skylines have fewer than $(n\log n)^{1/2}$ vertices together, merge them and rebuild the corresponding data structure from scratch in $O((n\log n)^{1/2}\cdot\log n)$ time. These divisions and merges can happen at most twice for a given update, and ensure that we always have $O((n/\log n)^{1/2})$ small data structures storing $O((n\log n)^{1/2})$ vertices each. The total cost of an update is then $O((n/\log n)^{1/2} + (n\log n)^{1/2}\cdot \log n) = O(n^{1/2}\log^{3/2} n)$.
\end{proof}

Given such a data structure, we can solve our original question. We keep track of all the $O(n)$ possible heights a rectangle could be dropped at, and when given a rectangle of a certain width, we do a binary search through all possible heights to find the lowest height which will accommodate a rectangle of that width. This increases the query time by a factor of $O(\log n)$, giving the following:
\begin{theorem}
There is an RDDS with $O(n\log n)$ construction time which supports queries and updates in $O(n^{1/2}\log^{3/2} n)$ time.
\end{theorem}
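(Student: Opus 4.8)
The plan is to use the data structure of the previous theorem essentially as a black box and bridge the gap between its notion of query and the one an RDDS requires. That data structure answers the \emph{inverse} of what we need: given a height $h$ it returns the widest rectangle that can be dropped to rest at or below $h$, whereas an RDDS is given the width of a rectangle $R$ and must return the lowest height at which $R$ can rest, together with the corresponding drop position. The key observation is monotonicity: writing $f(h)$ for the maximum width of a rectangle that can be dropped with its bottom ending up at or below height $h$, the function $f$ is non-decreasing in $h$, since allowing a higher resting position can only create more room. Consequently the lowest resting height for a width-$w_R$ rectangle is exactly the smallest $h$ with $f(h)\ge w_R$, and this can be located by binary search over $h$.

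For the query I would binary search for this smallest $h$. Because all coordinates are integers with $O(\log n)$-bit representations, the relevant heights lie in a range of size $n^{O(1)}$, so the binary search terminates after $O(\log n)$ iterations. (Equivalently one may binary search over the $O(n)$ distinct skyline-vertex heights kept in a sorted auxiliary structure; this too takes $O(\log n)$ steps.) Each iteration issues one query to the data structure of the previous theorem at cost $O(n^{1/2}\log^{1/2}n)$ and compares the returned width against $w_R$. The resting height $h^\star$ found, together with the drop position returned by the final successful query, yields the greedy move; the height of the highest point of the resulting configuration is then $\max\{h^\star+\mathrm{height}(R),\,H\}$, where $H$ is the current maximum skyline height, maintained as the maximum over the chunk maxima at no asymptotic overhead. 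The total query time is therefore $O(n^{1/2}\log^{1/2}n)\cdot O(\log n)=O(n^{1/2}\log^{3/2}n)$.

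Updates are inherited verbatim from the previous theorem, since dropping a rectangle at a specified position is precisely the update it already supports, costing $O(n^{1/2}\log^{3/2}n)$; the only addition is to refresh the maintained maximum height and, if one adopts the discrete-search route, to insert or delete the $O(1)$ affected candidate heights in the auxiliary sorted structure, all dominated by the update cost. Construction remains $O(n\log n)$ as before.

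I expect no serious obstacle: once monotonicity is in hand, everything reduces to a logarithmic-depth binary search wrapped around the existing query. The only points deserving care are the accounting of the search range—confirming the $O(\log n)$ iteration bound whether one searches the integer height range (valid since heights are $n^{O(1)}$, hence $O(\log n)$ bits) or the discrete candidate set—and verifying that the reported highest point correctly takes the maximum of the top of $R$ against the rest of the skyline rather than assuming $R$ becomes the new maximum.
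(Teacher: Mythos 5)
Your proposal is correct and matches the paper's argument: the paper likewise wraps a binary search over the $O(n)$ candidate heights around the query of the preceding theorem, incurring an extra $O(\log n)$ factor, with updates inherited directly. Your explicit treatment of monotonicity and of reporting the maximum height of the resulting configuration only makes precise what the paper leaves implicit.
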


This result applies even in the case where the RDDS is not necessarily initialized with an empty set of rectangles. Considering the lower bound from the OMv conjecture given in the previous section, this is likely close to optimal (although we have not made a big effort to optimize the polylogarithmic terms in the runtime). If one were to face an opponent playing the game with a sub-polynomial speed advantage over this data structure, we venture to guess this could be mitigated by an appropriate amount of distraction and foul play. 

\bibliography{tetris}

\end{document}